\journal{Journal of Discrete Algorithms}
\newtheorem{theorem}{Theorem}[section]
\newtheorem{proof}{Proof}[theorem]
\newtheorem{lemma}{Lemma}[theorem]
\newtheorem{proposition}{Proposition}[theorem]
\newtheorem{justification}{Justification}[theorem]
\begin{document}
\begin{frontmatter}
\title{Uniqueness Trees: A Possible Polynomial Approach to the Graph Isomorphism Problem}
\author{Jonathan Gorard}
\address{Department of Mathematics, King's College London, Strand, London, WC2R 2LS}
\begin{abstract}
This paper presents the novel `uniqueness tree' algorithm, as one possible method for determining whether two finite, undirected graphs are isomorphic. We prove that the algorithm has polynomial time complexity in the worst case, and that it will always detect the presence of an isomorphism whenever one exists. We also propose that the algorithm will equivalently discern the lack of an isomorphism whenever one does not exist, and some initial justifications are given for this proposition, although it cannot yet be rigorously proven. Finally, we present experimental evidence for both the effectiveness and efficiency of the uniqueness tree method, using data gathered from a practical implementation of the algorithm. Some consequences and directions for further research are discussed.
\end{abstract}
\begin{keyword}
graph isomorphism \sep
computational complexity \sep
canonical labelling
\end{keyword}
\end{frontmatter}

\section{Introduction}

\subsection{Background\\}%

The graph isomorphism problem is the decision problem of determining whether two finite graphs, ${G=(V,E)}$ and ${H=(U,F)}$ are isomorphic, denoted ${G \cong H}$. The graphs are isomorphic if and only if there exists a bijection $f$ between the two sets of vertices $V$ and $U$ such that, for every pair of vertices ${(u,v)}$ in $V$, the edge ${f(u)f(v)}$ exists in $F$ if and only if the corresponding edge ${uv}$ exists in E.\cite{prolubnikov} Formally:

\begin{equation}
G \cong H \iff \exists f: V \rightarrow U \mid \forall (u, v) \in V, f(u)f(v) \in F \iff uv \in E
\end{equation}

The graph isomorphism problem is of particular interest in the field of computational complexity theory, since it is one of only a few problems whose complexity class is not solidly classified: it is not known to be solvable in polynomial time, yet neither has it been shown to be NP-complete. Thus, it is often placed in the theoretical complexity class of `${NP}$-intermediate'\cite{dawar} (which, by Ladner's theorem, exists if and only if ${P \neq NP}$\cite{bodirsky}). In addition, efficient algorithms for detecting graph isomorphism are of great practical importance across a variety of fields, including network analysis, organic chemistry, condensed matter physics, chemical engineering, electronic engineering, computational biology, and others.\cite{akutsu}\cite{fan}\cite{whitham}

There exist many algorithms which run in polynomial (or even sub-polynomial) time in all practical cases, but which degenerate to exponential time in the worst-case\cite{foggia}, making them unsatisfactory from a complexity-theoretic point of view. We provide a brief outline of one such algorithm (McKay's NAUTY), for the purpose of demonstrating how its approach, and the approach adopted by many similar state-of-the-art algorithms, differs from the uniqueness tree method proposed in this paper.

We then provide both a formal and an informal statement of the uniqueness tree algorithm itself, along with an illustrative example of its application to a pair of non-isomorphic graphs. Next, we give a formal proof that the algorithm runs in septic polynomial (${O(n^7)}$) time in the worst case, and that the algorithm will always correctly detect the presence of an isomorphism between two graphs, whenever one exists. We propose that the converse statement is also true (i.e. that the algorithm will correctly discern the lack of an isomorphism, whenever one does not exist), and give a brief sketch of a possible proof method, though this statement remains a conjecture. Finally, we supply experimental evidence of both the algorithm's effectiveness in determining isomorphism/non-isomorphism between random graphs, and its polynomial efficiency.

For the purposes of this paper, we will consider only simple graphs (i.e. unweighted, undirected graphs containing no loops or multiple edges). However, it is possible to generalise these methods to directed graphs, as well as to graphs in which multiple edges and loops are permitted, as will be shown in a future work.

\subsection{NAUTY\\}

Most practical graph isomorphism algorithms work by reducing graphs to a so-called `canonical form' - an object whose structure is independent of the particular ordering of the vertices, but dependent upon all other properties of the graph.\cite{mckay1} Thus, if the canonical forms for two graphs are equivalent, then the graphs must be isomorphic; conversely, if the canonical forms differ, then the graphs must be non-isomorphic. NAUTY, as with most similar algorithms, operates by producing a so-called `search tree' as its canonical graph form, which is a rooted tree in which each vertex corresponds to a distinct `partition' (colouring) of the graph's vertices. Loosely, the process of partition refinement is as follows:\cite{mckay2}

\begin{figure}[H]
\fbox{
\begin{minipage}[B]{0.45\linewidth}
\centering
\includegraphics[width=100pt]{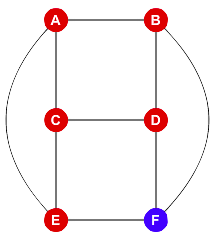}
\caption{A partition ${\pi}$.}
\end{minipage}
}
\fbox{
\begin{minipage}[B]{0.45\linewidth}
\centering
\includegraphics[width=100pt]{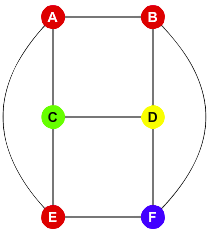}
\caption{A refined partition ${\pi_1}$.}
\end{minipage}
}
\end{figure}

\begin{enumerate}
\item The root of the tree is an initial (uniform) colouring of the graph.\\
\item If two vertices share the same colour in a particular partition, but have neighbourhoods with different colourings, then `refine' the partition by assigning each vertex a new colour. This refined partition is a child vertex of the original partition.\\
\item If a particular partition cannot be refined further, then that vertex of the tree becomes a leaf, with no children.
\end{enumerate}

However, the uniqueness tree algorithm does not produce a single search tree to represent a graph. Rather, it produces a set of `uniqueness trees': one for each vertex of the graph. In turn, each vertex of a uniqueness tree represents a single vertex of the graph, as opposed to the richer structure of an entire vertex partition. This paper conjectures that the set of uniqueness trees is a canonical graph form, and provides some initial justification for that assertion.

\section{The Uniqueness Tree Algorithm}

\subsection{Brief Outline\\}

For two finite, simple graphs ${G = (V, E)}$ and ${H = (U, F)}$, the uniqueness tree algorithm associates a rooted tree ${T(v)}$ with every vertex ${v \in V}$, and ${T(u)}$ with every vertex ${u \in U}$. This paper proposes that, if every tree associated with a vertex of $G$ is uniquely isomorphic to a tree associated with a vertex of $H$, then $G$ and $H$ are isomorphic:

\begin{equation}
G \cong H \iff \forall v \in V, \exists! u \in U \mid T(v) \cong T(u)
\end{equation}

Checking this criterion is an efficient process, since an isomorphism between two rooted trees may be computed in linear (${O(n)}$) time, where $n$ is the number of vertices in each tree.\cite{buss}

The process for generating the uniqueness tree for a vertex $v$ is as follows:
\begin{enumerate}
\item The root of the tree is the vertex $v$ itself.\\
\item Every vertex in the current level of the tree which is not `unique' (i.e. every vertex which appears in the current level more than once) becomes a leaf, producing no children.\\
\item Every unique vertex in the current level of the tree produces 1 child for every adjacent vertex in the graph.\\
\item This process continues until either the tree self-terminates (i.e. there are no more unique vertices on the current level), or the height of the tree reaches $n$ (the size of the graph).
\end{enumerate}

\subsection{An Example Case\\}

As an illustrative example, we shall test for isomorphism between the following pair of graphs:

\begin{figure}[H]
\fbox{
\begin{minipage}[B]{0.45\linewidth}
\centering
\includegraphics[height=160pt]{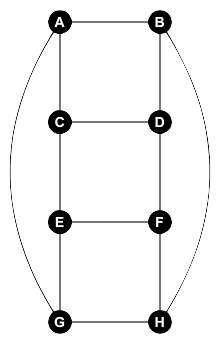}
\caption{A graph, ${G = (V, E)}$.}
\end{minipage}
}
\fbox{
\begin{minipage}[B]{0.45\linewidth}
\centering
\includegraphics[height=160pt]{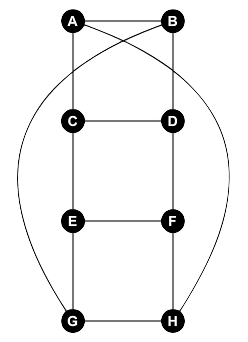}
\caption{Another graph, ${H = (U, F)}$.}
\end{minipage}
}
\end{figure}

Clearly, $G$ and $H$ are not isomorphic ($G$ is planar and $H$ is not). However, they are equivalent in every other respect, since $H$ is simply an embedding of $G$ from a plane onto a M\"obius strip. In the interests of brevity, we shall not apply the entire algorithm, since to do so would require generating 16 uniqueness trees, and then making up to 36 comparisons between them. Rather, we shall simply show that vertex $A$ in $G$ cannot be equivalent to vertex $A$ in $H$, and the non-isomorphism of $G$ and $H$ follows trivially.

For graph $G$, the first level of $A$'s uniqueness tree contains $A$'s immediate adjacencies: $B$, $C$ and $G$ (all of which are unique).

\begin{figure}[H]
\centering
\fbox{
\begin{minipage}[B]{0.9\textwidth}
\centering
\includegraphics[height=80pt]{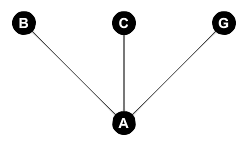}
\caption{The first level of $A$'s uniqueness tree for graph $G$.}
\end{minipage}
}
\end{figure}

Similarly, $B$'s children on the second level of the tree are $A$, $D$ and $H$ ($B$'s immediate adjacencies), $C$'s children are $A$, $D$ and $E$, and $G$'s children are $A$, $E$ and $H$. Since $A$, $D$, $E$ and $H$ all appear multiple times on the second level, all of the vertices becomes leaves and the tree terminates.

\begin{figure}[H]
\centering
\fbox{
\begin{minipage}[B]{0.9\textwidth}
\centering
\includegraphics[height=100pt]{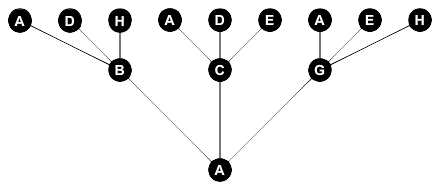}
\caption{The complete uniqueness tree for vertex $A$ in graph $G$.}
\end{minipage}
}
\end{figure}

On the other hand, the first level of $A$'s uniqueness tree for graph $H$ contains the adjacent vertices $B$, $C$ and $H$, which are, again, unique.

\begin{figure}[H]
\centering
\fbox{
\begin{minipage}[B]{0.9\textwidth}
\centering
\includegraphics[height=80pt]{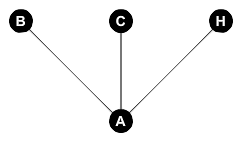}
\caption{The first level of $A$'s uniqueness tree for graph $H$.}
\end{minipage}
}
\end{figure}

$B$'s children on the second level of the tree are then $A$, $D$ and $G$, $C$'s children are $A$, $D$ and $E$, and $H$'s children are $A$, $F$ and $G$.  Vertices $A$, $D$ and $G$ become leaves, since they are not unique at this level.

\begin{figure}[H]
\centering
\fbox{
\begin{minipage}[B]{0.9\textwidth}
\centering
\includegraphics[height=100pt]{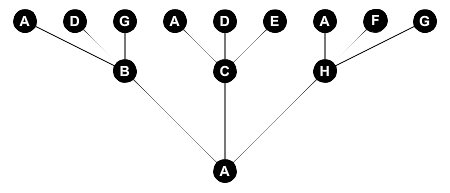}
\caption{The second level of $A$'s uniqueness tree for graph $H$.}
\end{minipage}
}
\end{figure}

Since vertices $E$ and $F$ are both unique on the second level, their adjacencies ($C$, $F$ and $G$ for vertex $E$, and $D$, $E$ and $H$ for vertex $F$) are carried to the third level of the tree. Since $C$, $D$, $E$, $F$, $G$ and $H$ are all unique on the third level, their adjacencies will, in turn, be carried up to the fourth level, and so on.

\begin{figure}[H]
\centering
\fbox{
\begin{minipage}[B]{0.9\textwidth}
\centering
\includegraphics[height=150pt]{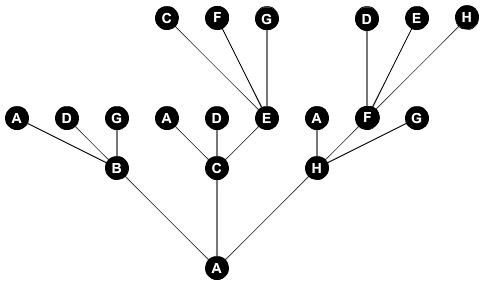}
\caption{The third level of $A$'s uniqueness tree for graph $H$.}
\end{minipage}
}
\end{figure}

Already, we can see that the two uniqueness trees cannot possibly be isomorphic (since the first tree has only two levels, whilst the second has at least four), and so vertex $A$ in graph $G$ cannot be equivalent to vertex $A$ in graph $H$.

\subsection{Formal Statement\\}

The uniqueness tree algorithm may be divided into two distinct stages: the tree generation stage, and the tree comparison stage.

\begin{algorithm}[H]
\caption {Uniqueness tree generation}
\begin{algorithmic}[1]
\For {each graph ${G = (V,E)}$}
\For {each vertex ${v \in V}$}
\State {create a new tree ${T(v)}$, with $v$ as its root}
\While {${\exists}$ $u$, a vertex which appears only once in the current level of ${T(v)}$, and ${height(T(v)) < n}$}
\For {each ${w \in neighbourhood(u)}$}
\State {add $w$ to the next level of the uniqueness tree.}
\EndFor
\EndWhile
\EndFor
\EndFor
\end{algorithmic}
\end{algorithm}

Then, for two graphs ${G = (V, E)}$ and ${H = (U, F)}$:

\begin{algorithm}[H]
\caption {Uniqueness tree comparison}
\begin{algorithmic}[1]
\For {each unmapped vertex ${v \in V}$}
\For {each unmapped vertex ${u \in U}$}
\If {${height(T(v)) \neq height(T(u))}$}
\State {$v$ and $u$ are not equivalent}
\EndIf
\For {each level of ${T(v)}$}
\If {${width(currentlevel(T(v))) \neq width(currentlevel(T(u)))}$}
\State {$v$ and $u$ are not equivalent}
\EndIf
\For {$i \gets 1, (n - 1)$}
\If {vertices with $i$ children in ${currentlevel(T(v)) \neq}$ vertices with $i$ children in ${currentlevel(T(u))}$}
\State {$v$ and $u$ are not equivalent}
\EndIf
\EndFor
\EndFor
\If {$v$ and $u$ are equivalent}
\State {map $v$ onto $u$}
\EndIf
\EndFor
\EndFor
\If {all vertices ${v \in V}$ and ${u \in U}$ have been mapped}
\State {${G \cong H}$}
\Else
\State {${G \ncong H}$}
\EndIf
\end{algorithmic}
\end{algorithm}

\section{Rigorous Results}

\subsection{Proof of Polynomial Time Complexity\\}

\begin{theorem}
For two finite, simple graphs ${G=(V,E)}$ and ${H=(U,F)}$, each of size $n$, the uniqueness tree algorithm runs in septic polynomial time (${O(n^7)}$) in the worst case.
\end{theorem}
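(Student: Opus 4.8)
The plan is to split the running time into the algorithm's two stages — uniqueness-tree generation and uniqueness-tree comparison — and bound each separately. Everything rests on first showing that a single uniqueness tree has only polynomially many vertices; once that is established, both stages reduce to straightforward nested-loop bookkeeping.

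First I would bound the size of $T(v)$ for an arbitrary root vertex $v$. Its height is at most $n$, since the generation loop explicitly halts once $height(T(v))$ reaches $n$, so $T(v)$ has $O(n)$ levels. The key claim is that every level has width $O(n^2)$: the root level has width $1$, and for $k \ge 1$ every vertex occurring on level $k$ is a child of some vertex that occurred exactly once on level $k-1$; since $G$ has only $n$ vertices, at most $n$ distinct labels — hence at most $n$ ``unique'' vertices — can occur on level $k-1$, and each contributes at most $n-1$ children, so level $k$ has at most $n(n-1) = O(n^2)$ vertices. Consequently each tree has $O(n^3)$ vertices, and the algorithm builds $2n$ such trees in all. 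I expect this width bound to be the main obstacle: without the uniqueness rule the branching factor could be as large as $n-1$ compounded over $n$ levels, i.e. exponential growth, so the proof must pin down precisely how ``uniqueness'' caps the number of branching vertices per level at $n$.

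Next I would bound the generation stage. Fix a graph and a root $v$. Building $T(v)$ proceeds level by level through $O(n)$ levels, and on each level the loop processes the vertices occurring exactly once one at a time — at most $n$ per level, hence $O(n^2)$ processing steps over the whole tree. Each step must (i) locate a not-yet-processed vertex that is unique on the current level, which a direct implementation does by comparing vertices pairwise across a level of width $O(n^2)$, at cost $O(n^4)$, and (ii) append that vertex's $\le n-1$ neighbours, at cost $O(n)$. Multiplying gives $O(n^6)$ per tree and $O(n^7)$ over all $2n$ trees; a sharper implementation that bucket-counts the labels $1,\dots,n$ would do better, but $O(n^7)$ already suffices.

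Finally I would bound the comparison stage. There are at most $n$ unmapped vertices in $V$ and at most $n$ in $U$, so $O(n^2)$ candidate pairs $(v,u)$. For each pair the algorithm compares heights in $O(1)$, then runs through the $O(n)$ levels of $T(v)$; on each level it compares the two widths (each $O(n^2)$ to read off) and, for each $i = 1,\dots,n-1$, compares the number of vertices with exactly $i$ children, each such count obtained by scanning the $O(n^2)$ vertices of the level. This totals $O(n^2) \cdot O(n) \cdot O(n) \cdot O(n^2) = O(n^6)$, so the comparison stage is dominated by the generation stage. Adding the two bounds, the algorithm runs in $O(n^7)$ time in the worst case, as claimed. (Alternatively, the explicit per-level invariant checks could be replaced by the linear-time rooted-tree isomorphism test of \cite{buss} applied to each of the $O(n^2)$ pairs of $O(n^3)$-vertex trees, which stays within the same bound.)
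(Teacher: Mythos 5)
Your proposal is correct and follows essentially the same route as the paper: the same two-stage decomposition into generation and comparison, the same key observation that the uniqueness rule caps each level at $n$ branching vertices and hence $O(n^2)$ width, the same $O(n^3)$ bound on tree size, and the same $O(n^2)$ count of tree pairs to compare. The only difference is bookkeeping --- your generation bound comes out at $O(n^7)$ and your comparison bound at $O(n^6)$, whereas the paper charges $O(n^6)$ to generation and $O(n^7)$ to comparison --- but both accountings are valid upper bounds and yield the same $O(n^7)$ total.
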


Since the uniqueness tree algorithm can be divided into two sequential stages (tree generation and tree comparison), we shall analyse the time complexity of each stage separately, and then add the two complexities together.

\begin{lemma}
The tree generation algorithm runs in sextic polynomial time (${O(n^6)}$) in the worst case.
\end{lemma}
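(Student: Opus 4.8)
The plan is to bound the \emph{size} of a single uniqueness tree first, then the cost of constructing it level by level, and finally to multiply through by the number of trees per graph and the number of graphs. Concretely, I would show that each level of a uniqueness tree has width $O(n^2)$ and that there are $O(n)$ levels, so that each level can be processed in $O(n^4)$ time, each tree generated in $O(n^5)$ time, and the whole generation stage completed in $O(n^6)$ time.

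First I would establish the width bound. By construction, the root level of $T(v)$ contains the single vertex $v$, and every subsequent level is obtained by letting each vertex that is \emph{unique} in the current level (i.e. whose label occurs exactly once there) spawn one child per neighbour in the graph, while every repeated vertex becomes a leaf and spawns nothing. Since the graph contains only $n$ vertices, at most $n$ distinct labels can appear in any level, hence at most $n$ of them can be unique; and each unique vertex has degree at most $n - 1$, so contributes at most $n - 1$ children. Therefore every level beyond the root has width at most $n(n - 1) = O(n^2)$. The loop guard $height(T(v)) < n$ caps the number of levels at $O(n)$, so $T(v)$ has $O(n^3)$ vertices in total.

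Next I would account for the work done on a single level. The dominant operation is deciding which vertices of the current level are unique; carried out by naive pairwise comparison of the (at most $O(n^2)$) labels present, this costs $O((n^2)^2) = O(n^4)$. Fetching the neighbourhood of each of the at most $n$ unique vertices --- for instance by scanning its row of the adjacency matrix --- and appending the resulting children to the next level costs only $O(n^2)$, which is dominated. Hence each level costs $O(n^4)$; a single uniqueness tree, having $O(n)$ levels, costs $O(n^5)$ to generate; the $n$ trees associated with one graph cost $O(n^6)$; and repeating this for both $G$ and $H$ leaves the overall bound at $O(n^6)$.

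The step I expect to be the main obstacle is the width bound, and in particular the need to argue rigorously that repeated vertices contribute nothing to the next level --- it is exactly this pruning that stops the construction from behaving like an ordinary breadth-first tree, whose levels could grow exponentially wide --- so that the quantity controlling the next level's width is the number of \emph{distinct} unique labels, which is at most $n$, rather than the raw width. A secondary point worth stating explicitly is the cost model: if the uniqueness test is implemented by sorting or hashing the labels it drops to $O(n^2)$ (or $O(n^2 \log n)$) per level and the overall bound would improve accordingly, so the claimed $O(n^6)$ should be read as the bound for the straightforward pairwise-comparison implementation.
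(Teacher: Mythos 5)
Your proposal is correct and follows essentially the same decomposition as the paper's own proof: a width bound of $O(n^2)$ per level, a height bound of $O(n)$, an $O(n^2)$ pairwise-comparison cost for the uniqueness test of each tree vertex (you aggregate this as $O(n^4)$ per level where the paper charges $O(n^2)$ per tree vertex over $O(n^3)$ vertices, which is the same count), and a final multiplication by the $O(n)$ trees across both graphs to reach $O(n^6)$. Your closing remarks on why pruning at repeated vertices is the essential width-control step, and on the dependence of the bound on the naive uniqueness-test implementation, are sensible observations that go slightly beyond what the paper states but do not change the argument.
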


\begin{proof}
\hfill
\begin{enumerate}
\item The total number of uniqueness trees which must be generated is ${2n = O(n)}$.\\
\item The maximum width of a single uniqueness tree is ${O(n^2)}$ (since the maximum number of unique vertices which can appear in a single level is $n$, and each vertex can have a maximum of ${(n - 1)}$ adjacencies, giving a maximum of ${(n^2 - n)}$ children on the next level, all of which would be leaves).\\
\item ${\therefore}$ The maximum number of vertices in each uniqueness tree is ${O(n^3)}$ (since the maximum width is ${O(n^2)}$, and the maximum height is ${O(n)}$).\\
\item The maximum number of operations required to generate each vertex in the tree is ${O(n^2)}$ (i.e. a maximum of ${(n - 1)}$ comparisons with other vertices in the graph, plus a maximum of ${(n^2 - n)}$ comparisons with other vertices in the current level of the tree, in order to test uniqueness).\\
\item ${\therefore}$ The worst case time complexity of the tree generation algorithm is

${O(n) * O(n^3) * O(n^2) = O(n^6)}$. ${\blacksquare}$
\end{enumerate}
\end{proof}

\begin{lemma}
The tree comparison algorithm runs in septic polynomial time ${O(n^7)}$ in the worst case.
\end{lemma}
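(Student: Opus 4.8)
The plan is to mirror the structure of the proof of Lemma 1.1.1: identify the nested loops of the tree comparison algorithm, bound the number of iterations of each, bound the cost of the work performed at the innermost step, and multiply these quantities together. Since the generation stage has already been handled and the two stages run sequentially, establishing this lemma immediately yields the overall bound of the theorem, because $O(n^6) + O(n^7) = O(n^7)$.

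First I would bound the two outermost loops. In the worst case no early mapping ever succeeds, so every vertex $v \in V$ is compared against every vertex $u \in U$, contributing $O(n) \times O(n) = O(n^2)$ pairwise tree comparisons. Next I would bound the cost of a single comparison. The comparison walks through the levels of $T(v)$, of which there are at most $n$ by the height bound of Lemma 1.1.1, contributing a factor of $O(n)$. For a fixed level I would account for two pieces of work. The width check amounts to counting the vertices of that level, which costs at most $O(n^2)$ since a single level contains at most $n^2$ vertices (Lemma 1.1.1, step 2); the heights consulted in the preceding test may be taken as recorded during generation, so that test is $O(1)$ and does not dominate. The child-count comparison iterates $i$ from $1$ to $n - 1$, a factor of $O(n)$, and for each $i$ it tallies how many vertices of the current level have exactly $i$ children, at a cost of $O(n^2)$ vertices examined times up to $O(n)$ children counted per vertex, i.e. $O(n^3)$ per value of $i$. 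The per-level cost is therefore $O(n^2) + O(n) \cdot O(n^3) = O(n^4)$, the per-comparison cost is $O(n) \cdot O(n^4) = O(n^5)$, and the total is $O(n^2) \cdot O(n^5) = O(n^7)$, which I would record in the same enumerated format as Lemma 1.1.1, concluding with $\blacksquare$.

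There is no deep obstacle here; the argument is essentially a bookkeeping exercise whose only delicate point is the accounting of the innermost child-counting step, since a more careful implementation changes the exponent. In particular, if one precomputes the degree sequence of each level once instead of recomputing ``vertices with $i$ children'' afresh for every $i$, the bound tightens to $O(n^5)$. I would mention this but retain the $O(n^7)$ figure, both because it matches the pseudocode as literally written and because the theorem only requires a polynomial bound of that order, so no sharper estimate is needed.
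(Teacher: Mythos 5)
Your proposal is correct and follows essentially the same structure as the paper's proof: $O(n^2)$ pairwise tree comparisons (the paper counts these as $\sum_{i=1}^{n} i = O(n^2)$, you as $O(n)\times O(n)$, which is the same bound) multiplied by an $O(n^5)$ cost per comparison, giving $O(n^7)$ overall. The only divergence is in the internal accounting of the $O(n^5)$ term --- the paper bounds it as $O(n^2)$ vertices per level compared pairwise against $O(n^2)$ vertices in the corresponding level across $O(n)$ levels, whereas you bound it via the explicit $i$-loop over child counts as $O(n)\cdot O(n)\cdot O(n^3)$ --- but both are valid worst-case readings of the pseudocode and land on the same exponent.
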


\begin{proof}
\hfill
\begin{enumerate}
\item The first tree of graph $G$ must be compared with a maximum of $n$ trees from graph $H$, the second with ${(n-1)}$ trees, and so on.

${\therefore}$ The maximum number of tree comparisons which must be made is ${\displaystyle\sum\limits_{i=1}^n i = O(n^2)}$.\\
\item Determining an isomorphism between the two rooted trees requires comparing the heights of both trees, comparing the total number of vertices in each level of each tree, and comparing the total number of children possessed by each vertex in each level of each tree.\\
\item Comparing the heights of two rooted trees requires ${O(1)}$ operation.\\
\item Comparing the total number of vertices in each level of two rooted trees requires ${O(n)}$ operations in the worst case (since each tree may hold a maximum of $n$ levels).\\
\item Each level may contain a maximum of ${(n^2 - n)}$ vertices, each of which must be compared with a maximum of ${(n^2 - n)}$ vertices in the corresponding level of the other tree, for each of a possible $n$ levels of the tree.

${\therefore}$ Comparing the total number of children possessed by each vertex in each level of each tree requires

${O(n^2) * O(n^2) * O(n) = O(n^5)}$ operations in the worst-case.\\
\item ${\therefore}$ The worst case time complexity for detecting an isomorphism between the two rooted trees is

${O(n^5) + O(n) + O(1) = O(n^5)}$.\\
\item ${\therefore}$ The worst case time complexity of the tree comparison algorithm is

${O(n^2) * O(n^5) = O(n^7)}$. ${\blacksquare}$
\end{enumerate}
\end{proof}

Since ${O(n^7) + O(n^6) = O(n^7)}$, the desired theorem follows directly from these two lemmas.

\subsection{Proof of Effectiveness (Positive Case)\\}

\begin{theorem}
If two graphs ${G = (V, E)}$ and ${H = (U, F)}$ are isomorphic, then the uniqueness tree algorithm will correctly determine that ${G \cong H}$.
\end{theorem}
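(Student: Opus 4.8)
The plan is to exhibit, for every vertex $v \in V$, an explicit isomorphism between the rooted tree $T(v)$ built inside $G$ and the rooted tree $T(f(v))$ built inside $H$, where $f : V \to U$ is any graph isomorphism witnessing $G \cong H$, and then to show that the existence of these tree isomorphisms forces the comparison stage to complete a full mapping. First I would fix such an $f$ and prove, by induction on the level index $k$, the claim that the label-renaming induced by $f$ sends level $k$ of $T(v)$ bijectively onto level $k$ of $T(f(v))$, respecting the parent--child relation, and that a vertex $w$ occurring on level $k$ of $T(v)$ is unique on that level if and only if $f(w)$ is unique on level $k$ of $T(f(v))$. The base case is just the root, $v \mapsto f(v)$. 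For the inductive step, since $f$ is a bijection preserving adjacency and non-adjacency, the multiset of labels on level $k$ of $T(v)$ is carried by $f$ onto the multiset on level $k$ of $T(f(v))$, so multiplicities---and hence the ``unique'' predicate---are preserved; each unique $w$ then spawns the children $\mathrm{neighbourhood}(w)$ in $G$, whose $f$-image is exactly $\mathrm{neighbourhood}(f(w))$ in $H$, i.e. the children of $f(w)$. Because $|V| = |U| = n$, the height cap $\mathrm{height}(T) < n$ and the self-termination condition fire on both trees at the same level, so $T(v) \cong T(f(v))$ as rooted trees.

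Next I would pass from full tree isomorphism to the weaker invariants the comparison algorithm actually tests. Isomorphic rooted trees have equal height, equal width at every level, and, at every level, the same number of vertices with exactly $i$ children for each $i$; these are precisely the quantities the comparison routine inspects, so $v$ and $f(v)$ are declared equivalent. Moreover, the predicate ``$v$ and $u$ are equivalent'' computed by the algorithm is a conjunction of equalities (of heights, of level-width profiles, of per-level child-count profiles), hence an equivalence relation on $V \cup U$ that is coarser than the rooted-tree-isomorphism type. Consequently $V$ and $U$ are each partitioned into equivalence classes, and using transitivity together with $v \sim f(v)$ for all $v$ (and $u \sim f^{-1}(u)$ for all $u$, since $f^{-1}$ is also an isomorphism), $f$ carries each equivalence class of $V$ bijectively onto a class of $U$; in particular corresponding classes have equal size.

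Finally I would analyse the greedy matching loop: processing the vertices of $V$ in whatever order the algorithm uses, when $v_i$ is reached there is always an unmapped $u$ with $v_i \sim u$, because the algorithm only ever maps a vertex to one in its own class, so the number of $U$-vertices already mapped into $v_i$'s class equals the number of already-processed $V$-vertices in that class, which is strictly less than the common class size; hence an unmapped, equivalent partner remains. Thus every $v \in V$ is mapped, and by the same size bookkeeping every $u \in U$ is mapped, so the algorithm reports $G \cong H$. I expect the main obstacle to be this greedy-matching step: one must argue carefully that the algorithm's equivalence predicate is genuinely transitive, since it is transitivity---not merely the existence of a perfect matching between the classes---that guarantees a first-come-first-served assignment never strands a later vertex. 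A secondary delicate point is arranging the induction in the first step so that the two stopping conditions are bookkept and $T(v)$ and $T(f(v))$ provably terminate on exactly the same level.
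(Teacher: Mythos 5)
Your proposal is correct, and its core step is the same one the paper uses: a graph isomorphism $f$ preserves adjacency and is a bijection, so by induction on the level index it carries the multiset of labels on each level of $T(v)$ onto the corresponding multiset of $T(f(v))$, preserving multiplicities and hence the uniqueness predicate, so that corresponding vertices spawn corresponding children and the two trees terminate together. Where you genuinely go beyond the paper is in the second half. The paper's proof stops at ``$G$ and $H$ will produce an equivalent set of uniqueness trees'' and declares the theorem to follow; it never checks that the \emph{comparison} stage --- which tests only heights, level widths, and per-level child-count profiles, and then performs a greedy first-fit matching of unmapped vertices --- actually succeeds in mapping every vertex. You supply exactly this missing piece: the tested predicate is a conjunction of equalities of numerical invariants and hence a transitive equivalence relation; $v \sim f(v)$ together with transitivity forces $f$ to carry each equivalence class of $V$ onto a class of $U$ of equal size; and the counting argument then shows the first-come-first-served assignment can never strand a later vertex. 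That transitivity-plus-class-size bookkeeping is essential (a mere perfect matching between classes would not by itself guarantee greedy success), and it is absent from the paper. In short, your argument proves strictly more than the paper's sketch does, at the cost of the extra induction bookkeeping for the two stopping conditions, which you handle correctly since isomorphic graphs share the same vertex count $n$ and the level multisets agree at every stage.
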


\begin{proof}

\hfill

The key point is that the uniqueness tree algorithm is based entirely around vertex adjacencies (which are graph-invariant), and does not depend at all upon vertex ordering. Suppose that ${uv \in E}$ and ${f(u)f(v) \in F}$.

\begin{figure}[H]
\fbox{
\begin{minipage}[B]{0.45\textwidth}
\centering
\includegraphics[height=70pt]{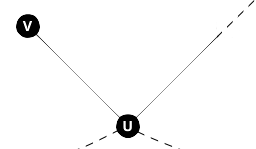}
\caption{A section of a uniqueness tree for graph $G$.}
\end{minipage}
}
\fbox{
\begin{minipage}[B]{0.45\textwidth}
\centering
\includegraphics[height=70pt]{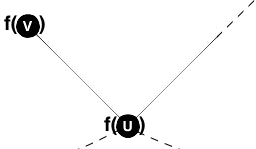}
\caption{A section of the equivalent uniqueness tree for graph $H$.}
\end{minipage}
}
\end{figure}

\begin{enumerate}
\item If a vertex $u$ is not unique to a particular level of a particular uniqueness tree for graph $G$, then the vertex ${f(u)}$ will be also non-unique to the equivalent level of the equivalent vertex tree for graph $H$.\\
\item Conversely, if vertex $u$ is unique to that level, then vertex $v$ in the next level of $G$'s tree will have the same uniqueness property as vertex ${f(v)}$ in $H$'s tree, and so on.\\
\item Thus, it follows that if ${f(u)f(v) \in F \iff uv \in E}$, then graphs $G$ and $H$ will produce an equivalent set of uniqueness trees.\\
\item Since an isomorphism is defined as a bijection $f$ between the sets $V$ and $U$ which satisfies the above property, the desired theorem follows naturally. ${\blacksquare}$
\end{enumerate}
\end{proof}

\subsection{Proposition of Effectiveness (Negative Case)\\}

\begin{proposition}
If two graphs ${G = (V, E)}$ and ${H = (U, F)}$ are not isomorphic, then the uniqueness tree algorithm will correctly determine that ${G \ncong H}$.
\end{proposition}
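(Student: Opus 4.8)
The plan is to prove the contrapositive in its strongest available form: instead of reasoning directly about a pair of non-isomorphic graphs, I would try to show that the collection of uniqueness trees $\{T(v) : v \in V\}$, taken up to rooted-tree isomorphism, is a \emph{complete} invariant — a genuine canonical form — so that equality of the two collections forces $G \cong H$; the proposition then follows by contraposition. The first move is a clean-up step: replace the comparison algorithm's weak notion of tree equivalence (equal height, equal level widths, equal multiset of child-counts per level) by honest rooted-tree isomorphism, which is still decidable in linear time by the result of \cite{buss} and so costs nothing in the complexity analysis of Section~3.1 while strengthening the hypothesis we are permitted to exploit.

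With that in place, suppose the comparison stage succeeds and yields a bijection $f : V \to U$ with $T(v) \cong T(f(v))$ as rooted trees for every $v$. I would then try to promote $f$ (or a refinement of it read off from the tree isomorphisms) to a graph isomorphism by induction on the breadth-first levels of the trees. The base case is immediate, since level $1$ of $T(v)$ is precisely the neighbourhood of $v$, so $\deg(v)=\deg(f(v))$ and the degree sequences agree. For the inductive step, the \emph{uniqueness} mechanism is meant to carry the weight: whenever a vertex $w$ occurs exactly once on some level of $T(v)$, the sub-exploration hanging from that occurrence — a breadth-first tree from $w$, pruned by the same level-wise uniqueness rule — must be matched by a rooted-tree isomorphism $T(v)\cong T(f(v))$ to the analogous sub-exploration from some $w'\in U$, and one argues that $w\mapsto w'$ is forced, is consistent across all trees in which $w$ appears uniquely, and is adjacency-preserving along every edge both of whose endpoints are eventually reached along unique branches. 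The non-unique (leaf) vertices, which carry no subtree, would have to be pinned down separately by a double-counting argument, since the multiplicity of a vertex on a given level counts certain walks from the root and one would try to invert this data to recover the remaining adjacencies.

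The hard part — and the reason this is advertised only as a proposition — is exactly the non-unique vertices, and at the extreme the \emph{regular}, more sharply the strongly regular, graphs, where the uniqueness condition never bites beyond level~$1$. If $G$ is strongly regular with parameters $(n,k,\lambda,\mu)$, then from any root the first level consists of $k$ distinct neighbours, while on the second level the root occurs with multiplicity $k$, each neighbour with multiplicity $\lambda$, and each non-neighbour with multiplicity $\mu$; as soon as $k,\lambda,\mu\ge 2$ every second-level vertex is therefore non-unique, the tree terminates at height $2$, and its height, its level widths ($k$ and $k^2$), and its child-count profile depend only on $k$, not on the graph. Hence any two non-isomorphic strongly regular graphs with the same $k$ and with $\lambda,\mu\ge 2$ — for instance the $4\times 4$ rook's graph and the Shrikhande graph, both $(16,6,2,2)$ — produce identical collections of uniqueness trees, and the comparison stage would wrongly report them isomorphic. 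The induction of the previous paragraph thus stalls precisely where $G$ has automorphisms acting transitively on the vertex classes the trees are trying to separate, and no counting argument can recover what that symmetry has erased; the obstruction is not a missing technical lemma but an expressiveness ceiling, directly analogous to the known failure of the colour-refinement procedure of Section~1.2 ($1$-dimensional Weisfeiler--Leman) on strongly regular graphs.

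The realistic outcome of this plan is therefore a \emph{conditional} theorem rather than the full proposition: the reconstruction argument should go through for the class of graphs in which every vertex eventually becomes unique on some level of some uniqueness tree — in particular for graphs with trivial automorphism group — where the induction can be pushed to completion, but it cannot hold in general. To recover the full statement one would have to enrich the algorithm, for example by recording on each level the isomorphism type of the subgraph of $G$ induced on that level's vertex set rather than only vertex multiplicities, and then redo the complexity bounds of Section~3.1 for the stronger invariant; I would flag verifying the Shrikhande-versus-rook's-graph computation as the concrete first experiment to run, since it simultaneously delimits which graphs the present method can handle and identifies the minimal enrichment needed to close the gap.
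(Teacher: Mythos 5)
You have not proved the proposition, and for good reason: your analysis shows it is false as stated, and your counterexample is correct. The paper itself offers only an informal justification (explicitly flagged as a conjecture), and that sketch fails exactly where you locate the obstruction. Its step 1 fixes a single bijection $f$ and exhibits one adjacency discrepancy, whereas non-isomorphism only guarantees that \emph{every} bijection has \emph{some} discrepancy; steps 2--4 then tacitly assume any such discrepancy visibly alters the uniqueness pattern, which is precisely what fails when the uniqueness rule never bites. Your strongly regular computation checks out: for an $(n,k,\lambda,\mu)$ strongly regular graph with $k,\lambda,\mu \ge 2$, every uniqueness tree has a root with $k$ children, a first level of $k$ distinct (hence unique) neighbours each producing $k$ children, and a second level on which the root occurs $k$ times, each neighbour of the root $\lambda$ times, and each non-neighbour $\mu$ times; no second-level vertex is unique, so every tree terminates at height $2$ with level widths $1$, $k$, $k^2$ and a child-count profile depending only on $k$. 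The Shrikhande graph and the $4\times 4$ rook's graph, both $(16,6,2,2)$ and non-isomorphic (the neighbourhood of a vertex is $C_6$ in one and $2K_3$ in the other), therefore yield sixteen identical trees apiece; the comparison stage maps every vertex and wrongly reports isomorphism. This also explains why the experiments of Section~4 never detect the failure: randomly perturbed random graphs almost surely have asymmetric local structure, so the uniqueness condition keeps firing and such pairs are never encountered.

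The constructive half of your plan --- upgrading the comparison to genuine rooted-tree isomorphism and running an induction on levels to rebuild a graph isomorphism --- is the right way to extract a true \emph{conditional} statement, and restricting to graphs in which every vertex eventually appears uniquely on some level of some tree is a sensible hypothesis under which the induction has a chance. Even there, though, the step ``$w \mapsto w'$ is forced and consistent across all trees'' needs more than you give it: distinct vertices can root isomorphic subtrees, so the tree isomorphisms need not determine a well-defined vertex correspondence, and gluing the local matchings into a single adjacency-preserving bijection requires an explicit argument (or a stronger hypothesis such as trivial automorphism group together with a discreteness condition on the refinement). As a verdict on the proposition actually stated in the paper, however, your proposal amounts to a refutation, not a proof, and the refutation is sound.
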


\begin{justification}

\hfill

One possible approach to proving this proposition may be an analogy to the positive case, as in the following informal sketch:

\begin{enumerate}
\item If ${G \ncong H}$, then (without loss of generality) there must exist at least one ${f(u)f(v) \in F}$, such that ${uv \notin E}$ (by negation of the definition of isomorphism).\\
\item If vertex $u$ (in a particular level of a particular uniqueness tree for graph $G$) has a different uniqueness property to vertex ${f(u)}$ (in the equivalent level of the equivalent uniqueness tree for graph $H$), then the uniqueness trees produced will be non-isomorphic, since one vertex will become a leaf, whilst the other will not.\\
\item Conversely, if vertices $u$ and ${f(u)}$ are both unique, then there will be an instance of the vertex ${f(v)}$ in the next level of $H$'s tree, without a corresponding instance of the vertex $v$ in $G$'s. Thus, the uniqueness trees produced will be non-isomorphic, since the vertices in the next level will have different uniqueness properties as a result.\\
\item Additionally, if vertices $u$ and ${f(u)}$ are both non-unique, then the non-uniqueness of vertex ${f(u)}$ may be the result of having ${f(v)}$ as its parent vertex in $H$'s tree, whilst $v$ cannot be a parent vertex of $u$ in $G$'s tree. If this is the case, then the uniqueness trees produced will, again, be non-isomorphic.
\end{enumerate}
\end{justification}

Clearly, further work will be required to either formalise a proof of this proposition, or to demonstrate why it is incorrect.

\section{Practical Implementation}

For the purposes of experimentally verifying both the effectiveness and efficiency of the uniqueness tree method, we present a practical implementation of the algorithm in Java; the complete source code for this algorithm, along with all of the tests described in this section, may be found in the Appendix. Independent verification of these experimental findings is strongly encouraged.

\subsection{Experimental Verification of Effectiveness (Positive Case)\\}

The algorithm was tested on 10,000 pairs of random graphs, ranging in size from 1 to 100 (100 pairs of each). The graph pairs were known to be isomorphic, since the second graph was generated by randomly permuting the vertices of the first. The presence of an isomorphism was correctly detected in all 10,000 cases.

\subsection{Experimental Verification of Effectiveness (Negative Case)\\}

When testing proposed algorithms for graph isomorphism, generating pairs of graphs which are known to be non-isomorphic is a fundamentally difficult problem, since doing so pre-supposes an effective criterion for detecting isomorphisms in the first place. It is trivial to generate pairs of graphs with differing graph invariant properties, but then one is simply testing the capability of the algorithm to discern those particular graph invariants, rather than its ability to detect non-isomorphism in general.

For the purpose of the present experimental test, we will attempt to generate a pair of similar but non-isomorphic graphs, by creating a second graph that is a vertex-permutation of the first (as above), but with one edge randomly replaced. Clearly, there is a small probability of inadvertently creating an isomorphic pair, particularly when the graphs involved are small. Indeed, when tested on the first 900 pairs of graphs with ${n < 10}$, some apparent false-positives were produced. However, when these possible exceptions were tested (either by hand, or by cross-checking with NAUTY), it was determined that all of the problematic graphs had been, by chance, isomorphic. For the remaining 9,100 pairs of graphs with ${n \geq 10}$, there were no false positives.

\subsection{Experimental Verification of Efficiency\\}

The total computation time required for the algorithm to test for isomorphism between 10,000 pairs of random isomorphic test graphs (100 pairs of each size $n$, with $n$ ranging from 1 to 100) was recorded, and then plotted as a function of $n$. The equivalent process was repeated for 10,000 pairs of random non-isomorphic test graphs, also:

\begin{figure}[H]
\fbox{
\begin{minipage}[B]{0.45\textwidth}
\centering
\includegraphics[height=180pt]{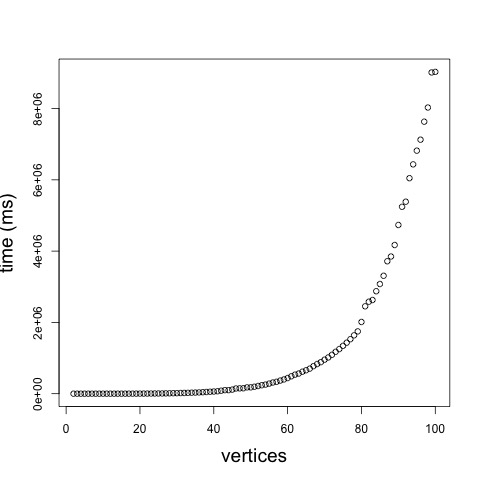}
\caption{The computation time required to test for isomorphism between 100 pairs of known \textit{isomorphic} graphs, as a function of the number of vertices.}
\end{minipage}
}
\fbox{
\begin{minipage}[B]{0.45\textwidth}
\centering
\includegraphics[height=180pt]{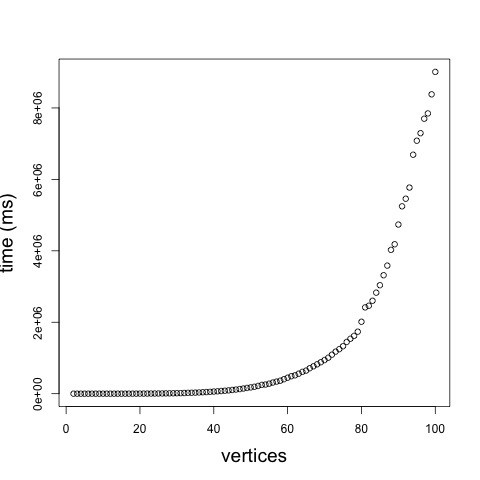}
\caption{The computation time required to test for isomorphism between 100 pairs of known \textit{non-isomorphic} graphs, as a function of the number of vertices.}
\end{minipage}
}
\end{figure}

For large $n$, taking the natural logarithm of both axes produces a straight line, demonstrating the asymptotically polynomial efficiency of the algorithm (since if ${y = x^k}$, then ${ln(y) = k ln(x)}$, with the gradient of the line thus representing the degree of the polynomial):

\begin{figure}[H]
\fbox{
\begin{minipage}[B]{0.45\textwidth}
\centering
\includegraphics[height=180pt]{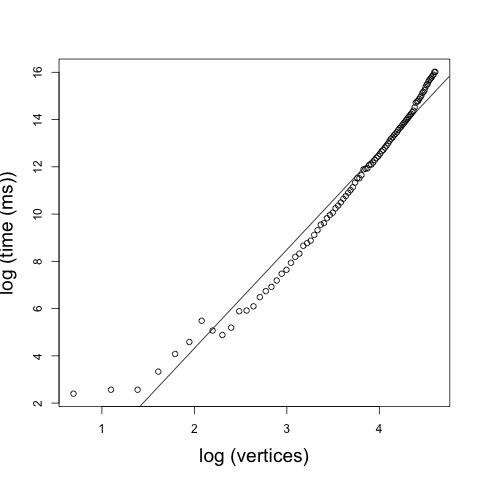}
\caption{The natural logarithm of the computation time required to test for isomorphism between 100 pairs of \textit{isomorphic} graphs, as a function of the natural logarithm of the number of vertices. The plot fits a straight line of gradient 4.174996, with ${R^2 = 0.968827}$.}
\end{minipage}
}
\fbox{
\begin{minipage}[B]{0.45\textwidth}
\centering
\includegraphics[height=180pt]{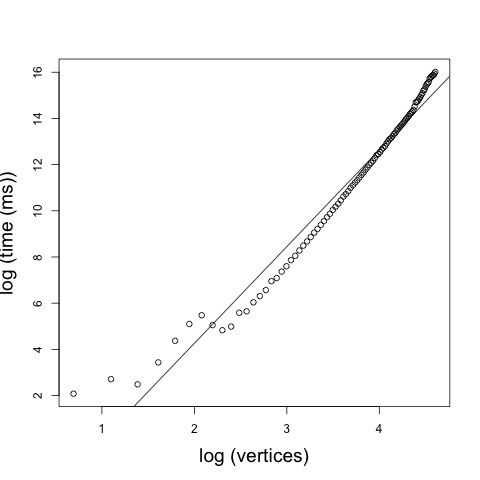}
\caption{The natural logarithm of the computation time required to test for isomorphism between 100 pairs of \textit{non-isomorphic} graphs, as a function of the natural logarithm of the number of vertices. The plot fits a straight line of gradient 4.187919, with ${R^2 = 0.966003}$.}
\end{minipage}
}
\end{figure}

The seemingly anomalous points observed for low values of $n$ are due to intrinsic computational overheads, which naturally smooth out for larger graphs. Thus, removing the results for the first 2,000 graph pairs from each plot (i.e. the values ${1 \leq n \leq 20}$) produces a much better linear fit in both cases:

\begin{figure}[H]
\fbox{
\begin{minipage}[B]{0.45\textwidth}
\centering
\includegraphics[height=180pt]{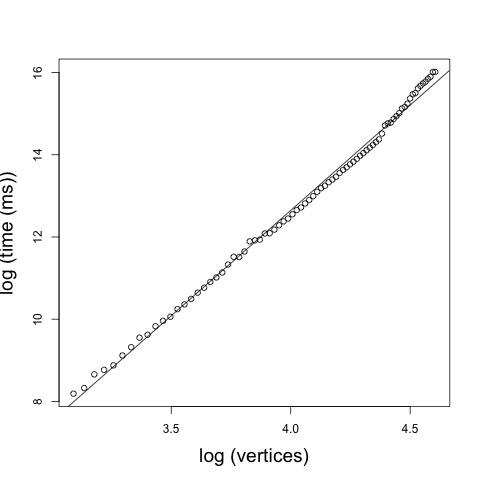}
\caption{The same as the plot above, but with the results for the smallest 2,000 \textit{isomorphic} graph pairs removed. The plot now fits a straight line of gradient 5.124289, with ${R^2 = 0.995931}$.}
\end{minipage}
}
\fbox{
\begin{minipage}[B]{0.45\textwidth}
\centering
\includegraphics[height=180pt]{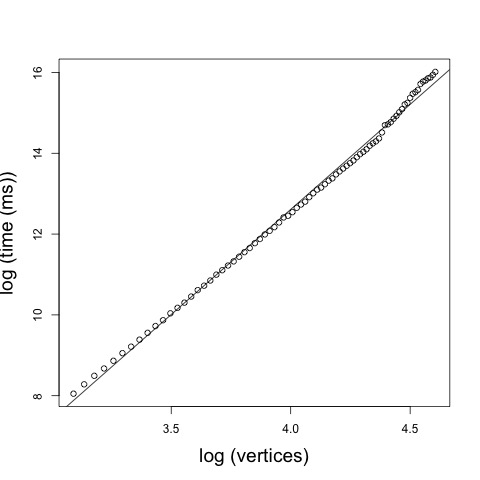}
\caption{The same as the plot above, but with the results for the smallest 2,000 \textit{non-isomorphic} graph pairs removed. The plot now fits a straight line of gradient 5.202082, with ${R^2 = 0.996664}$.}
\end{minipage}
}
\end{figure}

These plots suggest that the uniqueness tree algorithm runs in approximately quintic polynomial time (${O(n^5)}$) on average for random graphs, which is consistent with our proof that the time complexity should never exceed ${O(n^7)}$.

\section{Concluding Remarks}

This paper has presented the polynomial time `uniqueness tree' algorithm, and used a combination of rigorous results and heuristic verification to propose that it may represent an effective approach to tackling the graph isomorphism problem.

The importance placed upon the graph isomorphism problem within computational complexity theory is primarily due to its present status as a prime candidate for an `${NP}$-intermediate' problem\cite{dawar} (a problem that is neither in the complexity class $P$, nor ${NP}$-complete), and many other proposed ${NP}$-intermediate problems are known to be reducible to GI.\cite{miller} Thus, if the graph isomorphism problem could be shown to be solvable in polynomial time, it would constitute evidence against the existence of the ${NP}$-intermediate set (and therefore, by Ladner's theorem, evidence that ${P = NP}$\cite{bodirsky}). As such, any indication of the possibility of a polynomial time solution for GI would have significant implications for the entire field.

Further research into the effectiveness of the uniqueness tree algorithm will centre around attempting to either prove or disprove the proposition that it is able to correctly discern the lack of an isomorphism in all relevant cases. Extensions of the methods described in this paper to non-simple and directed graphs will appear in a future work, along with possible generalisations of the uniqueness tree approach to related problems in graph theory (including graph automorphism, subgraph homeomorphism, etc.).

\section{Acknowledgements}

The author is grateful to Richard Bridges, for many helpful suggestions and clarifications during the early stages of this research.

\section{References}

\newpage
\appendix{Appendix: Source Code\\}

\begin{lstlisting}
import java.util.Random;
import java.io.*;

public class Main {

	private static Random rnd;
	private static FileWriter fileWriter;
	private static BufferedWriter bufferedWriter;

	private static int[] vertexCount;
	private static int[][] vertexDegree;
	private static int[][][] vertexAdjacency;

	private static int[] vertexMap;

	private static int[][] uniquenessTreeHeight;
	private static int[][][] uniquenessTreeWidth;
	private static int[][][][] uniquenessTreeVertices;
	private static int[][][][] uniquenessTreeVertexChildCount;

	private static int[][][][] uniquenessTreeVertexChildCountOccurrence;
	private static boolean[][] vertexMapped;

	public static void initialise (int n) {
		rnd = new Random ();

		vertexCount = new int[2];
		vertexDegree = new int[2][n];
		vertexAdjacency = new int[2][n][n];

		vertexMap = new int[100];

		uniquenessTreeHeight = new int[2][n];
		uniquenessTreeWidth = new int[2][n][n];
		uniquenessTreeVertices = new int[2][n][n][n * n];
		uniquenessTreeVertexChildCount = new int[2][n][n][n * n];

		uniquenessTreeVertexChildCountOccurrence = new int[2][n][n][n];
		vertexMapped = new boolean[2][n];
	}

	public static void generateRandomGraph (int size) {
		vertexCount[0] = size;

		for (int i = 0; i < size; i++) {
			for (int j = 0; j < size; j++) {
				if (i != j && rnd.nextDouble () <= 0.5) {
					boolean verticesAdjacent = false;

					for (int k = 0; k < vertexDegree[0][i]; k++) {
						if (vertexAdjacency[0][i][k] == j) {
							verticesAdjacent = true;
						}
					}

					if (!verticesAdjacent) {
						vertexAdjacency[0][i][vertexDegree[0][i]] = j;
						vertexAdjacency[0][j][vertexDegree[0][j]] = i;
						vertexDegree[0][i] += 1;
						vertexDegree[0][j] += 1;
					}
				}
			}
		}
	}

	public static void generateIsomorphicGraph () {
		vertexCount[1] = vertexCount[0];

		for (int i = 0; i < vertexCount[0]; i++) {
			vertexMap[i] = i;
		}

		for (int i = 0; i < vertexCount[0]; i++) {
			int swap = rnd.nextInt (vertexCount[0]);
			int temp = vertexMap[i];
			vertexMap[i] = vertexMap[swap];
			vertexMap[swap] = temp;
		}

		for (int i = 0; i < vertexCount[0]; i++) {
			vertexDegree[1][vertexMap[i]] = vertexDegree[0][i];
			for (int j = 0; j < vertexDegree[0][i]; j++) {
				vertexAdjacency[1][vertexMap[i]][j] = vertexMap[vertexAdjacency[0][i][j]];
			}
		}
	}

	public static void generateNonIsomorphicGraph () {
		generateIsomorphicGraph ();

		for (int i = 0; i < vertexCount[1]; i++) {
			if (vertexDegree[1][i] > 0) {
				vertexDegree[1][i] -= 1;
				vertexDegree[1][vertexAdjacency[1][i][vertexDegree[1][i]]] -= 1;
				
				for (int j = 0; j < vertexCount[1]; j++) {
					if (i != j && j != vertexAdjacency[1][i][vertexDegree[1][i]]) {
						boolean verticesAdjacent = false;
						
						for (int k = 0; k < vertexDegree[1][j]; k++) {
							if (vertexAdjacency[1][j][k] == i) {
								verticesAdjacent = true;
							}
						}
						
						if (!verticesAdjacent) {
							vertexAdjacency[1][i][vertexDegree[1][i]] = j;
							vertexAdjacency[1][j][vertexDegree[1][j]] = i;
							vertexDegree[1][i] += 1;
							vertexDegree[1][j] += 1;
							j = vertexCount[1];
							i = vertexCount[1];
						}
					}
				}
			}
		}
	}

	public static void generateUniquenessTrees () {
		for (int i = 0; i < 2; i++) {
			for (int j = 0 ; j < vertexCount[i]; j++) {

				uniquenessTreeVertices[i][j][0][0] = j;
				uniquenessTreeWidth[i][j][0] = 1;
				int uniqueVertexCount = 1;

				while (uniqueVertexCount > 0 && uniquenessTreeHeight[i][j] < (vertexCount[i] - 1)) {
					uniquenessTreeHeight[i][j] += 1;
					uniqueVertexCount = 0;

					for (int k = 0; k < uniquenessTreeWidth[i][j][uniquenessTreeHeight[i][j] - 1]; k++) {
						boolean vertexUnique = true;

						for (int l = 0; l < uniquenessTreeWidth[i][j][uniquenessTreeHeight[i][j] - 1]; l++) {
							if (k != l && uniquenessTreeVertices[i][j][uniquenessTreeHeight[i][j] - 1][k] == uniquenessTreeVertices[i][j][uniquenessTreeHeight[i][j] - 1][l]) {
								vertexUnique = false;
							}
						}

						if (vertexUnique) {
							uniqueVertexCount += 1;

							for (int l = 0; l < vertexDegree[i][uniquenessTreeVertices[i][j][uniquenessTreeHeight[i][j] - 1][k]]; l++) {
								uniquenessTreeVertices[i][j][uniquenessTreeHeight[i][j]][uniquenessTreeWidth[i][j][uniquenessTreeHeight[i][j]]] = 
										vertexAdjacency[i][uniquenessTreeVertices[i][j][uniquenessTreeHeight[i][j] - 1][k]][l];
								uniquenessTreeVertexChildCount[i][j][uniquenessTreeHeight[i][j] - 1][k] += 1;
								uniquenessTreeWidth[i][j][uniquenessTreeHeight[i][j]] += 1;
							}
						}
					}
				}
			}
		}
	}

	public static boolean computeIsomorphism () {
		for (int i = 0; i < 2; i++) {
			for (int j = 0; j < vertexCount[i]; j++) {
				for (int k = 0; k < uniquenessTreeHeight[i][j]; k++) {
					for (int l = 0; l < uniquenessTreeWidth[i][j][k]; l++) {
						uniquenessTreeVertexChildCountOccurrence[i][j][k][uniquenessTreeVertexChildCount[i][j][k][l]] += 1;
					}
				}
			}
		}

		for (int i = 0; i < vertexCount[0]; i++) {
			for (int j = 0; j < vertexCount[1]; j++) {

				if (!vertexMapped[0][i] && !vertexMapped[1][j]) {
					boolean verticesEquivalent =  true;

					if (uniquenessTreeHeight[0][i] != uniquenessTreeHeight[1][j]) {
						verticesEquivalent = false;
					}

					for (int k = 0; k < uniquenessTreeHeight[0][i]; k++) {
						if (uniquenessTreeWidth[0][i][k] != uniquenessTreeWidth[1][j][k]) {
							verticesEquivalent = false;
						}

						for (int l = 0; l < vertexCount[0]; l++) {
							if (uniquenessTreeVertexChildCountOccurrence[0][i][k][l] != uniquenessTreeVertexChildCountOccurrence[1][j][k][l]) {
								verticesEquivalent = false;
							}
						}
					}

					if (verticesEquivalent) {
						vertexMapped[0][i] = true;
						vertexMapped[1][j] = true;
					}
				}
			}
		}

		boolean graphsIsomorphic = true;

		for (int i = 0; i < vertexCount[0]; i++) {
			if (!vertexMapped[0][i]) {
				graphsIsomorphic = false;
			}
		}

		return graphsIsomorphic;
	}

	public static void main (String[] args) {
		int falseResults = 0;
		
		for (int i = 1; i < 100; i++) {
			System.out.println(i);
			long startTime = System.currentTimeMillis();

			for (int j = 1; j < 100; j++) {
				initialise (i);

				generateRandomGraph(i);
				generateIsomorphicGraph();
				//generateNonIsomorphicGraph();
				generateUniquenessTrees();
				if (!computeIsomorphism()) {
				//if (computeIsomorphism() && i > 9) {
					falseResults += 1;
					System.out.println ("False result! " + falseResults);
				}
			}

			try {
				fileWriter = new FileWriter ("output.csv", true);
				bufferedWriter = new BufferedWriter (fileWriter);
				bufferedWriter.write(Integer.toString(i) + ", " + Integer.toString((int)(System.currentTimeMillis() - startTime)));
				bufferedWriter.newLine();
				bufferedWriter.close();
			} catch (IOException ex) {
				System.out.println ("Error writing to output file.");
			}
		}
	}
}
\end{lstlisting}

\end{document}